\documentclass[11pt,a4paper]{article}
\usepackage[T1]{fontenc}
\usepackage[utf8]{inputenc}
\usepackage{lmodern}
\usepackage[margin=28mm]{geometry}

\usepackage{amsmath,amssymb,amsthm,mathtools}
\usepackage{microtype}
\usepackage{aliascnt}
\usepackage{needspace}
\usepackage{booktabs}
\usepackage{flafter}
\usepackage{placeins}

\usepackage{algorithm}
\usepackage[noend]{algpseudocode}
\usepackage{xcolor}
\usepackage{tikz}
\usetikzlibrary{arrows.meta}

\usepackage{xurl}
\usepackage{hyperref}
\def\HYPERCOLOR{blue!70!black}
\hypersetup{colorlinks=true,linkcolor=\HYPERCOLOR,citecolor=\HYPERCOLOR,urlcolor=\HYPERCOLOR}
\usepackage[capitalize,nameinlink,noabbrev]{cleveref}

\makeatletter
\providecommand{\theHALG@line}{}
\renewcommand{\theHALG@line}{\thealgorithm.\arabic{ALG@line}}
\makeatother

\newtheorem{theorem}{Theorem}[section]
\newaliascnt{lemma}{theorem}
\newtheorem{lemma}[lemma]{Lemma}
\aliascntresetthe{lemma}
\crefname{lemma}{Lemma}{Lemmas}
\Crefname{lemma}{Lemma}{Lemmas}
\AddToHook{env/theorem/begin}{\crefalias{section}{theorem}}
\AddToHook{env/lemma/begin}{\crefalias{section}{lemma}}

\crefname{algorithm}{Algorithm}{Algorithms}
\Crefname{algorithm}{Algorithm}{Algorithms}
\algrenewcommand\algorithmiccomment[1]{\hfill\(\triangleright\) #1}
\newcommand{\inputsize}{n}
\newcommand{\outputsize}{k}
\newcommand{\seqval}{a}
\newcommand{\weight}{w}
\newcommand{\rangeopt}{F}
\newcommand{\dpopt}{D}
\newcommand{\envelope}{H}
\newcommand{\predenv}{P}
\newcommand{\candidateset}{E}
\newcommand{\lowset}{L}
\newcommand{\highset}{R}

\newcommand{\noderoot}{T}
\newcommand{\nodeinterval}{I}
\newcommand{\outertree}{\mathsf{outer}}
\newcommand{\rootptr}{\mathsf{root}}
\newcommand{\witnessU}{U}
\newcommand{\witnessV}{V}
\newcommand{\suffix}{S}
\newcommand{\witnessweight}{W}
\newcommand{\emptyseq}{\varepsilon}
\newcommand{\scalar}{x}

\newcommand{\idx}{i}
\newcommand{\jdx}{j}
\newcommand{\leftbound}{l}
\newcommand{\rightbound}{r}
\newcommand{\pointp}{p}
\newcommand{\pointq}{q}
\newcommand{\blockleft}{b}
\newcommand{\blockright}{e}
\newcommand{\cut}{c}
\newcommand{\midpoint}{m}
\newcommand{\funleft}{f}
\newcommand{\funright}{g}
\newcommand{\treenode}{v}
\newcommand{\sourcept}{s}
\newcommand{\terminalpt}{t}
\newcommand{\pathheight}{h}
\newcommand{\tradeoff}{\alpha}

\newcommand{\Children}{\mathsf{Children}}
\newcommand{\JoinTrees}{\mathsf{Join}}
\newcommand{\Merge}{\mathsf{Merge}}
\newcommand{\last}{\mathsf{last}}
\newcommand{\Append}{\mathsf{Append}}
\newcommand{\PrefixAppend}{\mathsf{PrefixAppend}}
\newcommand{\Cat}{\mathsf{Cat}}
\newcommand{\Leaf}{\mathsf{Leaf}}

\title{Near-Linear Preprocessing for Range Weighted Longest Increasing Subsequence Queries}
\author{Keisuke Korogi\\[2pt]
  {\small Yokohama National University, Japan}\\
  {\small
    \href{mailto:keisuke.korogi.52@gmail.com}
         {\nolinkurl{keisuke.korogi.52@gmail.com}}
  }
}
\date{}

\hypersetup{
pdftitle={Near-Linear Preprocessing for Range Weighted Longest Increasing Subsequence Queries},
pdfauthor={Keisuke Korogi},
pdfsubject={Range Weighted Longest Increasing Subsequence Queries},
pdfkeywords={longest increasing subsequence, range queries, persistent data structures, single-crossing property}
}

\begin{document}

\maketitle

\begin{abstract}
We give a deterministic data structure for range weighted longest increasing subsequence (LIS) queries on $n$ values with positive weights. 
Preprocessing takes $O(n\log^2 n)$ time and space. 
Value queries take $O(\log n)$ time, and an optimal subsequence of $k$ elements can be reported in $O(k + \log n)$ time. 
This resolves an open question on truly subquadratic preprocessing with polylogarithmic query time. 
The key is a single-crossing property of envelopes for value-separated endpoint sets, enabling $O(\log n)$-time persistent merges. 
The bounds assume constant-time exact addition and comparison of weights. 
\end{abstract}

\section{Introduction}
\label{sec:introduction}

Let $\seqval_0, \dots, \seqval_{\inputsize-1}$ be a sequence with positive real weights $\weight_0, \dots, \weight_{\inputsize-1}$. 
A \emph{range weighted longest increasing subsequence} query asks for the maximum total weight of a strictly increasing subsequence in an index interval $[\leftbound, \rightbound)$. 
A reporting query also returns an optimal subsequence. 

For unweighted range LIS, Tiskin's semi-local comparison techniques give $O(\inputsize\log^2 \inputsize)$ preprocessing and $O(\log \inputsize)$ time to return the length~\cite{Tiskin15}; see also the formulation in~\cite[Lemma~B.14]{GGK25}. 
Gawrychowski, Gorbachev, and Kociumaka~\cite[Theorem~1.3]{GGK25} support reporting in $O(\outputsize\log^{\tradeoff} \inputsize)$ time after $O(\inputsize\log^{3-\tradeoff}\inputsize)$ preprocessing, for every $\tradeoff \in [0, 1]$. 
Here $\outputsize$ is the output length. 
Their construction uses core-sparse Monge matrix multiplication, where the \emph{core} records nonzero mixed second differences. 
The theorem and appendix numbers cited from~\cite{GGK25} refer to its full version. 

For arbitrary positive weights, Gawrychowski et al.~\cite[Appendix~B.1]{GGK25} ask whether truly subquadratic preprocessing and polylogarithmic query time are possible. 
Their alignment-matrix construction remains correct, but the unweighted core-size guarantee no longer follows from their analysis. 
Karthik C.~S. and Rahul~\cite[Section~7 of the full version]{KR26} also ask for a truly subquadratic algorithm for weighted one-dimensional range LIS.

\Needspace{8\baselineskip}
\subsection{Our result}
We answer these questions without bounding the core size.

\begin{theorem}
\label{thm:main}
There is a deterministic data structure for range weighted LIS with $O(\inputsize\log^2 \inputsize)$ preprocessing time and space, $O(\log \inputsize)$ value-query time, and $O(\outputsize + \log \inputsize)$ reporting time. 
\end{theorem}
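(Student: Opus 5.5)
We plan to prove \cref{thm:main} with a divide-and-conquer over value-separated endpoint sets, persistent balanced trees, and a single-crossing property that lets each merge be done by one split and one join. Fix the right endpoint $\rightbound$ and let $\rangeopt_\rightbound(\leftbound)$ be the maximum weight of an increasing subsequence inside $[\leftbound,\rightbound)$. Refine this by the value of the last element: for a value threshold, let $\dpopt_\rightbound(\leftbound, \scalar)$ be the best weight of an increasing subsequence in $[\leftbound,\rightbound)$ whose last value is at most $\scalar$. Then $\rangeopt_\rightbound(\leftbound)$ is the upper envelope $\envelope$, over candidate endpoints $\idx<\rightbound$, of the functions $\leftbound \mapsto$ (best chain ending at $\idx$ and starting at or after $\leftbound$). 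Each such function is a nonincreasing step function of $\leftbound$.

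The key structural step is the single-crossing lemma. Take two endpoint sets $\lowset$ and $\highset$ with $\lowset$ index-earlier and $\highset$ having all values larger (value-separated). Then the envelope of $\lowset$ minus the envelope of $\highset$ changes sign at most once as $\leftbound$ increases. The argument is an exchange argument. Suppose $\lowset$ wins at $\leftbound_1<\leftbound_2$ while $\highset$ wins strictly in between. Splice the optimal chains: positivity of weights and strict increase let us cut a chain at the point where it crosses the other chain in the index/value plane and swap prefixes. This yields a contradiction with optimality at one of the three points. Consequently the merged envelope is a prefix of one envelope's breakpoint sequence concatenated with a suffix of the other's. The crossing point is found by a binary search down two balanced trees, taking $O(\log \inputsize)$ time. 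The concatenation is done persistently by $\JoinTrees$ and split in $O(\log \inputsize)$ time, with lazy additive tags to handle the weight offset added when a new element extends chains.

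We then organize the preprocessing as a recursion on values (or a segment tree over the value range) combined with a sweep over $\rightbound$. Adding element $\rightbound-1$ updates envelopes only along its root-to-leaf path of the value tree. Each of the $O(\log \inputsize)$ nodes on that path requires one persistent merge of its two children's envelopes, and these children are value-separated, so the single-crossing lemma applies. This gives $O(\log^2\inputsize)$ time and space per step and $O(\inputsize\log^2 \inputsize)$ in total. We store the persistent root $\rootptr_\rightbound$ for every $\rightbound$.

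A value query on $[\leftbound,\rightbound)$ is a predecessor search for $\leftbound$ in the envelope tree rooted at $\rootptr_\rightbound$, taking $O(\log \inputsize)$ time. For reporting, every breakpoint stores a witness pointer ($\witnessU$, $\witnessV$) to the previous element of the optimal chain together with the persistent version in which to continue. After an initial $O(\log\inputsize)$ descent, each further element is then recovered in $O(1)$, which gives $O(\outputsize+\log\inputsize)$.

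The main obstacle is the single-crossing lemma itself. It must be stated for envelopes, not for individual chains, and must survive ties. Ties will be handled by a fixed tie-breaking order on endpoints, using that all weights are strictly positive and that values are strictly increasing. A secondary difficulty is making the witnesses $O(1)$-navigable under persistence. For this we would store, at each tree leaf, a direct pointer to the predecessor's leaf in an older version, so that reporting never needs a fresh $O(\log \inputsize)$ search per element.
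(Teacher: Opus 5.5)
Your skeleton matches the paper: endpoint functions, envelopes, single crossing for value-separated sets, a value-indexed outer tree swept by index, and one saved root per prefix. However, the two steps that carry the bounds fail as you describe them.

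\textbf{The $O(\log\inputsize)$ merge.} You store envelopes as breakpoint sequences with split, join and predecessor search. Two such trees have unrelated shapes, so there is no synchronized descent: a node of one tree does not give the other function's value at its coordinate. The straightforward binary search pays an $O(\log\inputsize)$ evaluation per probe, i.e.\ $O(\log^2\inputsize)$ per merge and $O(\inputsize\log^3\inputsize)$ overall. The missing idea is to make every version a path copy of one fixed full segment tree over $[0,\inputsize)$, with one leaf per coordinate even on constant stretches, and to cache at each node its score at its rightmost coordinate ($\last$). Then one comparison of the cached values of the two left children decides which half contains the crossing. The other half is taken whole, and the result is assembled by $\JoinTrees$ on the way back, with no split or rebalancing.

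\textbf{The crossing lemma.} It must not assume $\lowset$ is index-earlier. Children of a value-tree node interleave in index, and under that assumption the lemma is trivial. The real case is $\pointq<\pointp$ with $\seqval_{\pointp}<\seqval_{\pointq}$. There you need $\dpopt_{\pointp}(\leftbound)+\dpopt_{\pointq}(\leftbound')\le\dpopt_{\pointq}(\leftbound)+\dpopt_{\pointp}(\leftbound')$ for $\leftbound<\leftbound'\le\pointq$. The two optimal paths must meet because their four terminals alternate on the boundary of a rectangle; your sketch assumes this crossing without justifying it. The conclusion needed is also directional: $\highset$ wins strictly exactly on a prefix. That excludes the pattern $\highset,\lowset,\highset$ as well as the $\lowset,\highset,\lowset$ pattern you treat.

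\textbf{Reporting.} A lazy prefix add copies only $O(\log\inputsize)$ nodes, so almost all leaves of $\dpopt_{\pointp}$ are physically leaves of $\predenv_{\pointp}$ and of other versions. In those versions the optimal chain ends at a different element, so a predecessor pointer stored in a leaf cannot be correct for all of them. Writing fresh leaves costs the number of breakpoints per element, which is already $\Theta(\inputsize)$ for an increasing sequence with unit weights. So the chain information must live in the tags on the root-to-leaf path. Recovering it version by version then costs $O(\log\inputsize)$ per element, giving $O(\outputsize\log\inputsize)$.

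The paper's fix is to let the tag itself be a witness. Each node stores a suffix handle built by $O(1)$-time binary concatenation, and the handle's total weight is the additive tag. The witness at a coordinate is the concatenation of the suffixes along the path, leaf first. Every concatenation node has two nonempty children, so expanding $\outputsize$ elements costs $O(\outputsize)$ after the $O(\log\inputsize)$ descent, regardless of sharing.
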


The query intervals need not be known during preprocessing. 
The algorithm is exact, and its time and space bounds are independent of the weight magnitudes. 
We count exact additions and comparisons of weights, comparisons of input values, and operations on $O(\log \inputsize)$-bit indices and pointers at unit cost. 
Space counts real-weight cells and integer words.

\begin{table}[t]
\centering
\small
\setlength{\tabcolsep}{5pt}

\caption{
Selected time bounds for static range LIS. 
The value and reporting structures of~\cite{GGK25} can be combined within the stated preprocessing bounds. 
A dash indicates that no reporting bound is stated in the cited publication. 
}
\label{tab:comparison}

\vspace{4mm}

\begin{tabular}{@{}llccc@{}}
\toprule
Result & Weights & Preprocessing & Value query & Reporting \\
\midrule
Tiskin~\cite{Tiskin15} & Unit & $O(\inputsize\log^2 \inputsize)$ & $O(\log \inputsize)$ & --- \\
\cite{GGK25}, $\tradeoff=1$ & Unit & $O(\inputsize\log^2 \inputsize)$ & $O(\log \inputsize)$ & $O(\outputsize\log \inputsize)$ \\
\cite{GGK25}, $\tradeoff=0$ & Unit & $O(\inputsize\log^3 \inputsize)$ & $O(\log \inputsize)$ & $O(\outputsize)$ \\
\Cref{thm:main} & Positive real & $O(\inputsize\log^2 \inputsize)$ & $O(\log \inputsize)$ & $O(\outputsize + \log \inputsize)$ \\
\bottomrule
\end{tabular}

\end{table}

\paragraph{Unweighted specialization. }
Setting all weights to one gives ordinary range LIS. 
At the same $O(\inputsize \log^2 \inputsize)$ preprocessing time, our reporting bound is $O(\outputsize + \log \inputsize)$, compared with $O(\outputsize \log \inputsize)$ in~\cite{GGK25}. 
Their $O(\outputsize)$-time reporting bound uses $O(\inputsize \log^3 \inputsize)$ preprocessing. 
Tiskin~\cite[Section~8.1]{Tiskin13} also states that retaining intermediate data permits output-linear traceback for semi-local LCS queries on permutations.

\subsection{Overview}
For each endpoint $\pointp$, let $\dpopt_{\pointp}(\leftbound)$ be the maximum weight of an increasing subsequence ending at $\pointp$ with all indices at least $\leftbound$, or zero if $\leftbound>\pointp$. 
For an endpoint set $\candidateset$, let $\envelope_{\candidateset}$ be the pointwise maximum of zero and its endpoint functions. 
The optimum for $[\leftbound,\rightbound)$ is $\envelope_{[0,\rightbound)}(\leftbound)$. 

The key is a single-crossing property. 
If all endpoint values in $\lowset$ are smaller than those in $\highset$, then $\envelope_{\highset}(\leftbound) > \envelope_{\lowset}(\leftbound)$ holds exactly on a prefix of left boundaries. 
We derive this from a pairwise property using a planar path-exchange argument~\cite{Schmidt98,GGK25}. 
Consequently, $\envelope_{\lowset\cup\highset}$ consists of a prefix of $\envelope_{\highset}$ and a suffix of $\envelope_{\lowset}$. 

We represent each envelope by a persistent segment tree indexed by $\leftbound$. 
Single crossing permits $O(\log\inputsize)$-time merges with $O(\log\inputsize)$ new nodes. 
A mutable segment tree over input values stores these roots. 
A value-prefix query obtains the predecessor envelope, and a persistent prefix update constructs the new endpoint function. 
Updating the outer tree takes $O(\log\inputsize)$ further merges. 
Thus each element costs $O(\log^2\inputsize)$ time and space. 
Saving the envelope root stored at the outer root after each input prefix reduces a range query to one point evaluation. 

For reporting, persistent nodes store shared suffix witnesses represented by binary concatenations. 
Their weights serve as additive lazy tags. 
A query collects these suffixes along one tree path in $O(\log \inputsize)$ time. 
Expanding them takes $O(\outputsize)$ time. 
The construction requires no bound on Monge core size.

\section{Endpoint functions}
\label{sec:endpoints}

We use $[\blockleft,\blockright]$ and $[\blockleft,\blockright)$ for closed and half-open integer intervals. All logarithms are to base two.
We may assume that $\seqval_0,\dots,\seqval_{\inputsize-1}$ is a permutation of $[0,\inputsize)$: sort the pairs $(\seqval_{\idx},-\idx)$ lexicographically and replace each $\seqval_{\idx}$ by its zero-based rank, keeping the notation $\seqval_{\idx}$.
This takes $O(\inputsize\log\inputsize)$ time and preserves every strictly increasing subsequence, including when input values coincide.

For $0\leq\leftbound\leq\rightbound\leq\inputsize$, let
\begin{equation}
\label{eq:objective}
\rangeopt(\leftbound,\rightbound)\coloneqq
\max\left\{\ \sum_{\jdx=1}^{\outputsize}\weight_{\idx_{\jdx}}
\ \middle|\ \ 
\leftbound\leq\idx_1<\dots<\idx_{\outputsize}<\rightbound,
\quad \seqval_{\idx_1}<\dots<\seqval_{\idx_{\outputsize}}\ \right\}.
\end{equation}
The maximum ranges over all lengths $\outputsize$, including the empty subsequence of weight zero. Thus $\rangeopt(\leftbound,\leftbound)=0$, and every nonempty interval has a positive optimum value. 

For an endpoint $\pointp$, define $\dpopt_{\pointp}(\leftbound)$ to be the maximum weight of an increasing subsequence in $[\leftbound, \pointp]$ ending at $\pointp$ when $\leftbound\leq\pointp$, and set $\dpopt_{\pointp}(\leftbound)=0$ when $\leftbound>\pointp$. 
Each $\dpopt_{\pointp}$ is nonincreasing on $[0,\inputsize)$ and positive exactly on $[0,\pointp]$. 

For an endpoint set $\candidateset\subseteq[0,\inputsize)$, define its envelope by
\begin{equation}
\label{eq:envelope}
\envelope_{\candidateset}(\leftbound)
\coloneqq \max\Bigl(\{0\}\cup\{\dpopt_{\pointp}(\leftbound)\mid\pointp\in\candidateset\}\Bigr). 
\end{equation}
In particular, $\envelope_{\emptyset}=0$. 
The set $\candidateset$ restricts only the endpoint of a subsequence. 
Thus, 
\begin{equation}
\label{eq:query}
\rangeopt(\leftbound,\rightbound)=\envelope_{[0,\rightbound)}(\leftbound),
\end{equation}
since every nonempty feasible subsequence has an endpoint before $\rightbound$, whereas endpoints before $\leftbound$ contribute zero.

The predecessor envelope for position $\idx$ is $\predenv_{\idx}\coloneqq \envelope_{\{\jdx<\idx\, \mid\, \seqval_{\jdx}<\seqval_{\idx}\}}$. 
The usual last-element recurrence becomes
\begin{equation}
\label{eq:dp}
\dpopt_{\idx}(\leftbound)=
\begin{cases}
\predenv_{\idx}(\leftbound)+\weight_{\idx},&\leftbound\leq\idx,\\
0,&\leftbound>\idx.
\end{cases}
\end{equation}

\section{Single-crossing property}
\label{sec:crossing}

We show that, when two endpoint sets are separated by value, the higher-value envelope exceeds the lower-value envelope exactly on a prefix of left boundaries. 
Ties favour the lower-value set. 
The proof uses the standard planar path-exchange argument underlying Monge inequalities~\cite[Fact~B.7]{GGK25}. 

\begin{lemma}
\label{lem:pair}
Let $\pointp, \pointq \in [0, \inputsize)$ satisfy $\seqval_{\pointp} < \seqval_{\pointq}$. 
Then there exists $\cut \in [0, \inputsize]$ such that 
\[
\{ \leftbound \in [0, \inputsize) \mid \dpopt_{\pointp}(\leftbound) < \dpopt_{\pointq}(\leftbound) \} = [0, \cut). 
\]
\end{lemma}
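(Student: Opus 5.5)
We prove the equivalent monotonicity statement: for $\leftbound<\leftbound'$, $\dpopt_{\pointp}(\leftbound')<\dpopt_{\pointq}(\leftbound')$ implies $\dpopt_{\pointp}(\leftbound)<\dpopt_{\pointq}(\leftbound)$. This says the set in question is downward closed, hence of the form $[0,\cut)$. The proof splits according to where $\leftbound'$ lies relative to $\pointp$ and $\pointq$, with a path-exchange argument for the main case.

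\emph{Boundary cases.} If $\leftbound'>\pointq$, then $\dpopt_{\pointq}(\leftbound')=0$, so the hypothesis cannot hold.

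If $\pointp<\leftbound'\le\pointq$, then necessarily $\pointp<\pointq$. For every $\leftbound\le\pointp$, append $\pointq$ to an optimal subsequence for $\dpopt_{\pointp}(\leftbound)$; this is valid since $\pointp<\pointq$ and $\seqval_{\pointp}<\seqval_{\pointq}$. Positivity of $\weight_{\pointq}$ then gives $\dpopt_{\pointq}(\leftbound)\ge\dpopt_{\pointp}(\leftbound)+\weight_{\pointq}>\dpopt_{\pointp}(\leftbound)$. For $\pointp<\leftbound\le\pointq$, we have $\dpopt_{\pointp}(\leftbound)=0<\dpopt_{\pointq}(\leftbound)$. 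So the conclusion holds.

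\emph{Main case: $\leftbound<\leftbound'\le\min(\pointp,\pointq)$.} Suppose for contradiction that $\dpopt_{\pointp}(\leftbound)\ge\dpopt_{\pointq}(\leftbound)$ and $\dpopt_{\pointp}(\leftbound')<\dpopt_{\pointq}(\leftbound')$. Let $A$ be an optimal chain for $\dpopt_{\pointp}(\leftbound)$ and $B$ an optimal chain for $\dpopt_{\pointq}(\leftbound')$, viewed as chains of points $(\idx,\seqval_\idx)$ increasing in both coordinates.

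If $A$ starts at an index $\ge\leftbound'$, then $\dpopt_{\pointp}(\leftbound')\ge\dpopt_{\pointp}(\leftbound)\ge\dpopt_{\pointq}(\leftbound)\ge\dpopt_{\pointq}(\leftbound')$, which is a contradiction. Hence $A$ starts strictly left of $B$'s first element, but ends at value $\seqval_{\pointp}<\seqval_{\pointq}$, the value of $B$'s last element.

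The aim is a \emph{splice}: split $A=A_1A_2$ and $B=B_1B_2$ such that both
\[
A'=A_1B_2 \quad\text{and}\quad B'=B_1A_2
\]
are increasing chains. Then $A'$ ends at $\pointq$ and starts at $\ge\leftbound$, while $B'$ ends at $\pointp$ and starts at $\ge\leftbound'$. Either $B_1$ is nonempty, or, if $B_1=\emptyseq$, $B'=A_2$ and we must ensure $A_2$ starts at $\ge\leftbound'$. Since total weight is preserved,
\[
\dpopt_{\pointq}(\leftbound)+\dpopt_{\pointp}(\leftbound')\ \ge\ \weight(A)+\weight(B)=\dpopt_{\pointp}(\leftbound)+\dpopt_{\pointq}(\leftbound').
\]
Adding the two assumed inequalities gives the strict reverse inequality, a contradiction.

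\emph{Constructing the splice (main obstacle).} We choose the cut by a discrete intermediate-value argument. For each element $x$ of $A$, consider whether $x$ lies weakly below-left of the first element of $B$ beyond the cut, i.e.\ whether $A_1$ ending at $x$ can be followed by $B$'s remaining part. Symmetrically, the last element of $B_1$ must precede $A_2$'s first element in both coordinates.

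The concrete plan is as follows. Take $x$ to be the last element of $A$ with index $<\leftbound'$; it exists because $A$ starts before $\leftbound'$. Then scan forward along both chains, interpreting them as monotone staircases in the plane. $A$'s staircase starts to the left of $B$'s and finishes below it, so the two staircases must cross. At a crossing, pick consecutive elements $x\to x^+$ of $A$ and $y\to y^+$ of $B$ with $x$ below-left of $y^+$ and $y$ below-left of $x^+$, and cut after $x$ and after $y$.

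Two points need care. First, the chains may share elements, which is handled by cutting at the shared element. Second, $B_1=\emptyseq$ can only occur when $A_2$ already starts at index $\ge\leftbound'$. Because the permutation makes all coordinates distinct, strict comparisons are available throughout. Making this crossing argument airtight, along the lines of the planar exchange of~\cite[Fact~B.7]{GGK25}, is the step I expect to require the most care.
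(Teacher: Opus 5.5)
Your reduction matches the paper's. Downward closedness is equivalent to the exchange inequality $\dpopt_{\pointp}(\leftbound)+\dpopt_{\pointq}(\leftbound')\le\dpopt_{\pointq}(\leftbound)+\dpopt_{\pointp}(\leftbound')$, proved by swapping suffixes of two optimal solutions, and your boundary cases and weight count are fine. \textbf{The gap is that the splice, which is the whole content of the main case, is never shown to exist, and the mechanism you sketch is false as stated.} Take the chains $A=(0,0),(5,4)$ and $B=(2,6),(3,8)$ of points $(\idx,\seqval_{\idx})$, with $\leftbound=0$, $\leftbound'=1$, $\pointp=5$, $\pointq=3$. Here $A$ starts before $\leftbound'$ and ends to the right of and below $B$. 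Yet the staircase through $A$ stays at values at most $4$ and the one through $B$ at values at least $6$, so they never cross. Accordingly, there are no consecutive $x\to x^+$ in $A$ and $y\to y^+$ in $B$ with $y$ below-left of $x^+$. The only splice is $B_1=\emptyseq$, $A_1=(0,0)$, $A_2=(5,4)$. Your remark about $B_1=\emptyseq$ states a property such a splice must have, but does not produce one.

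What forces an intersection is anchoring both chains on a common boundary line, which is what the paper's grid graph does. Realise $A$ as a monotone lattice path from $(\leftbound,0)$ to $(\pointp+1,\seqval_{\pointp}+1)$, and $B$ as one from $(\leftbound',0)$ to $(\pointq+1,\seqval_{\pointq}+1)$, using the diagonal of the unit square of each chain element. In the rectangle $[\leftbound,\pointp+1]\times[0,\seqval_{\pointq}+1]$ these four endpoints are interleaved on the boundary, so by planarity the paths share a grid vertex $v$. Cutting both chains at $v$ gives the splice: elements whose square lies below-left of $v$ go before the cut, those above-right after it. Empty parts, shared elements and the start constraints then need no separate treatment, because the swapped paths still begin at $(\leftbound,0)$ and $(\leftbound',0)$.

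Separately, your main case $\leftbound<\leftbound'\le\min(\pointp,\pointq)$ includes $\pointp<\pointq$. There the geometry you rely on, that $A$ ends to the right of $B$, fails, and a splice need not exist. For example, $A=(0,5),(1,6)$, $B=(2,0),(3,7)$, $\leftbound=0$, $\leftbound'=1$ admits none. This subcase is harmless only because your extension argument already gives $\dpopt_{\pointq}(\leftbound)>\dpopt_{\pointp}(\leftbound)$ for every $\leftbound\le\pointq$ when $\pointp<\pointq$. Dispose of that order entirely, as the paper does, and run the exchange argument only for $\pointq<\pointp$ and $\leftbound<\leftbound'\le\pointq$.
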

\begin{proof}
If $\pointp < \pointq$, any increasing subsequence ending at $\pointp$ can be extended by $\pointq$. 
For $\leftbound \leq \pointp$, 
\[
\dpopt_{\pointq}(\leftbound) \geq \dpopt_{\pointp}(\leftbound) + \weight_{\pointq} > \dpopt_{\pointp}(\leftbound). 
\]
For $\pointp < \leftbound \leq \pointq$, 
\[
\dpopt_{\pointq}(\leftbound) \geq \weight_{\pointq} > 0 = \dpopt_{\pointp}(\leftbound), 
\]
while both functions vanish for $\leftbound > \pointq$. 
Thus the set is $[0,\, \pointq+1)$. 

Suppose instead that $\pointq<\pointp$. 
We first show, for $0 \leq \leftbound < \leftbound' \leq \pointq$, that 
\begin{equation}
\label{eq:exchange}
\dpopt_{\pointp}(\leftbound) + \dpopt_{\pointq}(\leftbound') \leq \dpopt_{\pointq}(\leftbound) + \dpopt_{\pointp}(\leftbound'). 
\end{equation}
Use the standard alignment-grid representation~\cite{Schmidt98,GGK25}, with positive weights on the diagonal edges. 
Consider the directed grid on $[0, n]^2$ with east and north edges of weight zero. 
For every input position $\idx$, add the diagonal edge 
\[
(\idx, \seqval_{\idx}) \longrightarrow (\idx+1, \seqval_{\idx}+1) \quad \text{of weight } \weight_{\idx}.
\]
The resulting graph is planar: diagonals lie inside distinct unit squares and intersect grid edges only at their endpoints. 

Put $\sourcept_{\leftbound} = (\leftbound, 0)$ and $\terminalpt_{\idx} = (\idx+1, \seqval_{\idx}+1)$. 
For $\leftbound \leq \idx$, the diagonal edges of any $\sourcept_{\leftbound}$--$\terminalpt_{\idx}$ path correspond to an increasing subsequence in $[\leftbound, \idx]$ whose values are at most $\seqval_{\idx}$. 
Conversely, every such subsequence determines a path of the same weight by joining its diagonal edges with zero-weight grid paths. 
If such a subsequence omits $\idx$, its indices and values are smaller than those of $\idx$. 
Appending $\idx$ increases its weight. 
Hence the maximum path weight from $\sourcept_{\leftbound}$ to $\terminalpt_{\idx}$ is exactly $\dpopt_{\idx}(\leftbound)$. 

Choose maximum-weight paths from $\sourcept_{\leftbound}$ to $\terminalpt_{\pointp}$ and from $\sourcept_{\leftbound'}$ to $\terminalpt_{\pointq}$. 
Both lie in $[\leftbound,\, \pointp + 1] \times [0,\, \seqval_{\pointq} + 1]$. 
The four terminals $\sourcept_{\leftbound}$, $\sourcept_{\leftbound'}$, $\terminalpt_{\pointp}$, $\terminalpt_{\pointq}$ are distinct and occur in this cyclic order on the boundary of the rectangle. 
Thus the two paths have a common vertex by planarity; see \cref{fig:exchange}. 
Swapping their suffixes at that vertex gives paths from $\sourcept_{\leftbound}$ to $\terminalpt_{\pointq}$ and from $\sourcept_{\leftbound'}$ to $\terminalpt_{\pointp}$. 
The sum of their weights is unchanged. 
Since their weights are at most $\dpopt_{\pointq}(\leftbound)$ and $\dpopt_{\pointp}(\leftbound')$, respectively, \eqref{eq:exchange} follows. 

By \eqref{eq:exchange}, $\dpopt_{\pointq}(\leftbound) - \dpopt_{\pointp}(\leftbound)$ is nonincreasing on $[0, \pointq]$. 
For $\leftbound > \pointq$, $\dpopt_{\pointq}(\leftbound) = 0 \leq \dpopt_{\pointp}(\leftbound)$. 
Hence the set in the lemma is a prefix of $[0, \inputsize)$. 
\end{proof}

\begin{figure}[t]
\centering
\begin{tikzpicture}[x=1.2cm,y=0.8cm,>=Stealth,font=\small]
  \draw[step=1,black!15,very thin] (0,0) grid (4,4);
  \draw[thin] (0,0) rectangle (4,4);

  \coordinate (s) at (0,0);
  \coordinate (sp) at (1,0);
  \coordinate (zp) at (4,2);
  \coordinate (zq) at (3,4);
  \coordinate (v) at (1,1);

  \draw[thick,->]
    (s) -- (1,1) -- (3,1) -- (zp);
  \draw[thick,dashed,->]
    (sp) -- (1,2) -- (2,3) -- (zq);

  \foreach \p in {s,sp,zp,zq,v}
    {\fill (\p) circle (1.8pt);}

  \node[below left] at (s) {$\sourcept_{\leftbound}$};
  \node[below] at (sp) {$\sourcept_{\leftbound'}$};
  \node[right] at (zp) {$\terminalpt_{\pointp}$};
  \node[above] at (zq) {$\terminalpt_{\pointq}$};
  \node[below right] at (v) {$\treenode$};
  \node[below] at (2,-0.65) {index};
  \node[rotate=90] at (-0.65,2) {value};
\end{tikzpicture}
\caption{
Two maximum-weight paths for the sequence $(0, 2, 3, 1)$
with unit weights. 
Here $\leftbound = 0$, $\leftbound' = 1$,
$\pointq = 2$, and $\pointp = 3$. 
Horizontal and vertical edges have weight zero; diagonal edges represent selected elements. 
}
\label{fig:exchange}
\end{figure}

The pairwise property extends to value-separated endpoint sets. 

\begin{lemma}
\label{lem:envelopes}
Let $\lowset,\, \highset \subseteq [0, \inputsize)$ satisfy $\seqval_{\pointp} < \seqval_{\pointq}$ for all $\pointp \in \lowset$ and $\pointq \in \highset$. 
Then there exists $\cut \in [0, \inputsize]$ such that $\envelope_{\lowset}(\leftbound) < \envelope_{\highset}(\leftbound)$ if and only if $\leftbound < \cut$, for every $\leftbound \in [0, \inputsize)$. 
Consequently, 
\begin{equation}
\label{eq:envelope-splice}
\envelope_{\lowset \cup \highset}(\leftbound) = 
\begin{cases}
\envelope_{\highset}(\leftbound), & \leftbound < \cut, \\
\envelope_{\lowset}(\leftbound), & \leftbound \geq \cut. 
\end{cases}
\end{equation}
\end{lemma}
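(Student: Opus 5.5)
We derive the statement from \cref{lem:pair} by showing that the set $S\coloneqq\{\leftbound\in[0,\inputsize)\mid \envelope_{\lowset}(\leftbound)<\envelope_{\highset}(\leftbound)\}$ is downward closed; a downward-closed subset of $[0,\inputsize)$ equals $[0,\cut)$ for some $\cut\in[0,\inputsize]$. If $\lowset$ or $\highset$ is empty the claim is immediate. When $\highset=\emptyset$ we have $\envelope_{\highset}=0\le\envelope_{\lowset}$, so $S=\emptyset$ and $\cut=0$. When $\lowset=\emptyset$ we have $\envelope_{\lowset}=0$, and $S$ is the set where some $\dpopt_{\pointq}$ with $\pointq\in\highset$ is positive, namely $[0,\max\highset]$. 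This is a prefix because each $\dpopt_{\pointq}$ is positive exactly on $[0,\pointq]$.

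In general, fix $\leftbound\in S$ and $\leftbound'<\leftbound$; we show $\leftbound'\in S$. Since $\envelope_{\highset}(\leftbound)>\envelope_{\lowset}(\leftbound)\ge 0$, the maximum defining $\envelope_{\highset}(\leftbound)$ is attained by some $\pointq\in\highset$ with $\dpopt_{\pointq}(\leftbound)=\envelope_{\highset}(\leftbound)$. For every $\pointp\in\lowset$ we have $\dpopt_{\pointp}(\leftbound)\le\envelope_{\lowset}(\leftbound)<\dpopt_{\pointq}(\leftbound)$, and $\seqval_{\pointp}<\seqval_{\pointq}$. So \cref{lem:pair} applies to the pair $(\pointp,\pointq)$. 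Its set is a prefix $[0,\cut_{\pointp\pointq})$ that contains $\leftbound$, hence also contains $\leftbound'$, which gives $\dpopt_{\pointp}(\leftbound')<\dpopt_{\pointq}(\leftbound')$. Because $\lowset$ is finite, taking the maximum over $\pointp\in\lowset$ keeps the strict inequality. We also need $0<\dpopt_{\pointq}(\leftbound')$, which holds since $\dpopt_{\pointq}$ is nonincreasing and $\dpopt_{\pointq}(\leftbound)>0$. Therefore $\envelope_{\lowset}(\leftbound')<\dpopt_{\pointq}(\leftbound')\le\envelope_{\highset}(\leftbound')$.

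The key point, which is the only mildly delicate step, is to choose a \emph{single} witness $\pointq$ at the point $\leftbound$ and compare it against every $\pointp$. Comparing the envelopes pointwise at the two positions directly would not work, because the maximizer in $\highset$ may change between $\leftbound$ and $\leftbound'$. The zero baseline in the envelope also needs care, which is why the strict inequality $\envelope_{\highset}(\leftbound)>0$ is used to guarantee that a witness exists.

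The splice formula \eqref{eq:envelope-splice} then follows from $\envelope_{\lowset\cup\highset}=\max(\envelope_{\lowset},\envelope_{\highset})$, which holds directly from \eqref{eq:envelope}. For $\leftbound<\cut$ the maximum is $\envelope_{\highset}(\leftbound)$. For $\leftbound\ge\cut$ we have $\envelope_{\lowset}(\leftbound)\ge\envelope_{\highset}(\leftbound)$, so the maximum is $\envelope_{\lowset}(\leftbound)$; this matches the convention that ties favour the lower-value set.
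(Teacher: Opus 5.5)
Your proof is correct and is essentially the paper's argument: both fix a single maximizing endpoint at one left boundary and transfer the comparison to other boundaries via \cref{lem:pair}. The only difference is direction. You fix $\pointq\in\highset$ at a point of strict dominance and push leftward, which guarantees a maximizer exists. The paper fixes $\pointp\in\lowset$ at a point where $\envelope_{\lowset}(\leftbound_0)\ge\envelope_{\highset}(\leftbound_0)$ and pushes rightward, so it needs a separate case for $\envelope_{\lowset}(\leftbound_0)=0$.
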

\begin{proof}
Suppose $\envelope_{\lowset}(\leftbound_0) \geq \envelope_{\highset}(\leftbound_0)$. 
If $\envelope_{\lowset}(\leftbound_0) = 0$, both envelopes remain zero for $\leftbound \geq \leftbound_0$. 

Otherwise, choose $\pointp \in \lowset$ attaining $\envelope_{\lowset}(\leftbound_0)$. 
For every $\pointq \in \highset$, 
\[
\dpopt_{\pointq}(\leftbound_0)
\leq \envelope_{\highset}(\leftbound_0)
\leq \dpopt_{\pointp}(\leftbound_0). 
\]
By \cref{lem:pair}, $\dpopt_{\pointq}(\leftbound) \leq \dpopt_{\pointp}(\leftbound)$ for all $\leftbound \geq \leftbound_0$. 
Taking the maximum over $\pointq$ and zero gives 
\[
\envelope_{\highset}(\leftbound)
\leq \dpopt_{\pointp}(\leftbound)
\leq \envelope_{\lowset}(\leftbound)
\qquad (\leftbound \geq \leftbound_0). 
\]
Hence the strict comparison holds on a prefix. 
Choosing $\envelope_{\lowset}$ at equality gives \eqref{eq:envelope-splice}. 
\end{proof}

\section{Persistent envelope representations}
\label{sec:persistence}

We represent envelopes and their witnesses in persistent segment trees. 
All nodes are immutable; updates use path copying~\cite{DSST89}.
Beyond persistent range updates, we only need to merge envelope pairs satisfying the single-crossing property. 

\subsection{Witnesses and suffix tags}
\label{sec:witnesses}

A witness is an index sequence represented by an immutable handle: the empty sequence $\emptyseq$, a singleton $\Leaf(\idx)$, or a binary concatenation $\Cat(\witnessU,\witnessV)$. 
We set $\Cat(\emptyseq,\witnessV)=\witnessV$ and $\Cat(\witnessU,\emptyseq)=\witnessU$; otherwise, concatenation creates one node pointing to the two nonempty handles. 
Each handle stores its total weight $\witnessweight$, where
\[
\witnessweight(\emptyseq)=0,\qquad
\witnessweight(\Leaf(\idx))=\weight_{\idx},\qquad
\witnessweight(\Cat(\witnessU,\witnessV))=\witnessweight(\witnessU)+\witnessweight(\witnessV). 
\]
Thus concatenation takes $O(1)$ time and space, without copying or balancing the represented sequences. 
Binary reconstruction witnesses of this form also appear in~\cite[Appendix~B, proof of Theorem~1.3]{GGK25}. 
These operations apply to arbitrary index sequences. 
The feasibility of the constructed witnesses is proved in \cref{sec:construction}. 

All function trees use the same balanced binary subdivision of $[0,\inputsize)$. 
An interval $[\blockleft,\blockright)$ of length at least two splits at $\midpoint=\lfloor(\blockleft+\blockright)/2\rfloor$. 
There is one leaf per coordinate, and the height is at most $\lceil\log\inputsize\rceil$. 
Intervals are carried by recursive calls rather than stored in nodes. 
We identify a root with its represented function when no confusion arises. 

A function node stores two child pointers, a suffix handle $\suffix$, and the score $\last$ at its rightmost coordinate.
At a leaf, it represents the pair $(\witnessweight(\suffix),\, \suffix)$.
At an internal node, a selected child pair $(\scalar,\witnessU)$ gives the pair
\begin{equation}
\label{eq:node-semantics}
\bigl(\scalar+\witnessweight(\suffix),\ \Cat(\witnessU,\suffix)\bigr).
\end{equation}
The concatenation is conceptual. 
Reading a node does not allocate a witness. 
The cache satisfies
\[
\last = 
\begin{cases}
    \witnessweight(\suffix), & \text{at a leaf}, \\
    \last(\text{right child}) + \witnessweight(\suffix), & \text{at an internal node}. 
\end{cases}
\]
Thus the weight of the suffix is also the additive lazy tag. 
Initially, we build one full tree for $\envelope_{\emptyset}$ with empty suffixes and zero caches, in $O(\inputsize)$ time and space. 
Every version retains this subdivision, including on constant intervals. 

\paragraph{Basic operations.}
The operation $\Append(\noderoot,\witnessV)$ appends $\witnessV$ at every coordinate of $\noderoot$.
It copies the root, replaces its suffix by $\Cat(\suffix,\witnessV)$, and increases its cache by $\witnessweight(\witnessV)$.
It takes $O(1)$ time and creates at most one function node and one witness node.

For a nonleaf root, $\Children(\noderoot)$ returns its two restrictions by applying $\Append$ to each child with the parent's suffix.
The order is child suffix followed by parent suffix, as in \eqref{eq:node-semantics}.
Conversely, $\JoinTrees(\funleft,\funright)$ creates a parent with children $\funleft,\funright$, empty suffix, and cache $\last(\funright)$.
Both operations take $O(1)$ time and create $O(1)$ nodes.
They leave all existing nodes unchanged.

\subsection{Prefix updates}

For $\cut\in[0,\inputsize]$, let $\PrefixAppend(\noderoot,\cut,\witnessV)$ append $\witnessV$ at coordinates below $\cut$ and leave all other coordinates unchanged.
This is an ordinary persistent prefix update, given in \cref{alg:prefix-append}.
On a fully covered interval it calls $\Append$; on an uncovered interval it returns the original root.
Only the interval containing the boundary is partially covered at any level.
Each level performs $O(1)$ operations, so the update takes $O(\log\inputsize)$ time and creates $O(\log\inputsize)$ function and witness nodes.

\begin{algorithm}[t]
\caption{$\PrefixAppend(\noderoot,\cut,\witnessV)$ on $[\blockleft,\blockright)$.}
\label{alg:prefix-append}
\begin{algorithmic}[1]
\setlength{\itemsep}{2pt}
\vspace{1mm}
\State \textbf{if} $\cut\leq\blockleft$ \textbf{then} \Return $\noderoot$
\State \textbf{if} $\blockright\leq\cut$ \textbf{then} \Return $\Append(\noderoot,\witnessV)$
\State $(\funleft,\funright)\gets\Children(\noderoot)$
\If{$\cut\leq\left\lfloor\dfrac{\blockleft + \blockright}{2}\right\rfloor$}
  \State \Return $\JoinTrees(\PrefixAppend(\funleft,\cut,\witnessV),\funright)$
\Else
  \State \Return $\JoinTrees(\Append(\funleft,\witnessV),\PrefixAppend(\funright,\cut,\witnessV))$
\EndIf
\end{algorithmic}
\end{algorithm}

Since $\predenv_{\idx}(\leftbound) = 0$ with an empty witness for $\leftbound > \idx$, the recurrence in \eqref{eq:dp} is realized by a single prefix update: 
\begin{equation}
\label{eq:prefix-dp}
\dpopt_{\idx} = \PrefixAppend(\predenv_{\idx},\  \idx + 1,\  \Leaf(\idx)). 
\end{equation}

\subsection{Merging envelopes}

\begin{lemma}
\label{lem:merge}
Let $\funleft$ and $\funright$ be represented on the same interval $[\blockleft, \blockright)$. 
Suppose $\funright(\leftbound) > \funleft(\leftbound)$ holds exactly on a prefix of this interval. 
Their pointwise maximum can be represented in $O(\log \inputsize)$ time using $O(\log \inputsize)$ new function and witness nodes. 
At equality, the witness of $\funleft$ is chosen. 
Both input trees remain unchanged. 
\end{lemma}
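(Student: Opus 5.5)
We merge recursively, in the style of \cref{alg:prefix-append}. Call $\Merge(\funleft,\funright)$ on $[\blockleft,\blockright)$. The invariant is that on the current interval, $\funright>\funleft$ holds exactly on a prefix. There are two base cases, both resolved by the caches $\last$. If $\last(\funright)>\last(\funleft)$, the prefix contains the rightmost coordinate and hence the whole interval, so we return $\funright$ unchanged. Otherwise, if the interval is a single leaf, we return $\funleft$; this realizes the tie rule. When the interval has length at least two and $\last(\funright)\leq\last(\funleft)$, we cannot yet tell where the crossing lies.

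In that case we apply $\Children$ to both roots, obtaining $(\funleft_1,\funleft_2)$ and $(\funright_1,\funright_2)$ on $[\blockleft,\midpoint)$ and $[\midpoint,\blockright)$. Each is a restriction of the original function, so each pair inherits the prefix property. We then test the left halves at their rightmost coordinate $\midpoint-1$ using $\last(\funright_1)$ versus $\last(\funleft_1)$; these caches are correct because $\Children$ is built from $\Append$, which updates $\last$.

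If $\last(\funright_1)>\last(\funleft_1)$, the crossing point lies in $[\midpoint,\blockright)$. We return $\JoinTrees(\funright_1,\Merge(\funleft_2,\funright_2))$.

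Otherwise the prefix ends at or before $\midpoint-1$, so $\funleft$ wins on all of $[\midpoint,\blockright)$, ties included. We return $\JoinTrees(\Merge(\funleft_1,\funright_1),\funleft_2)$.

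Correctness follows by induction on the interval length, using the pair semantics \eqref{eq:node-semantics}. Both $\Children$ and $\JoinTrees$ preserve the represented (value, witness) pair at each coordinate, so the returned tree represents $\funright$ on the prefix and $\funleft$ elsewhere. At equality it carries $\funleft$'s witness, as the statement requires.

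The input trees are unchanged because every operation only allocates new nodes ($\Append$ copies roots, and $\JoinTrees$ creates parents). For the cost, each level makes one recursive call and performs $O(1)$ calls to $\Children$, $\JoinTrees$ and $\Append$. Over height $\lceil\log\inputsize\rceil$ this gives $O(\log\inputsize)$ time and $O(\log\inputsize)$ new function and witness nodes.

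The main obstacle is justifying that the cache comparisons are exactly the right tests. We need that $\last$ at a restricted child equals the true function value at its rightmost coordinate, including after the lazy suffix is pushed down by $\Children$. We also need that a single comparison at the boundary coordinate decides which half contains the crossing. Both follow from the prefix hypothesis together with the cache invariant, which we would verify by a short induction on the $\Append$ and $\JoinTrees$ definitions.
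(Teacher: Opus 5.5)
Your proposal is correct and essentially reproduces the paper's \cref{alg:merge} and its proof. It uses the same single descent via $\Children$, the same test of $\last(\funright_{\mathrm L})$ against $\last(\funleft_{\mathrm L})$ at the left half's rightmost coordinate, one recursive call per level, $\JoinTrees$ to reassemble, and the same tie rule favouring $\funleft$; your extra early exit returning $\funright$ when $\last(\funright)>\last(\funleft)$ on a non-singleton interval is a valid shortcut (the prefix then covers the whole interval) but is not needed for the stated bounds.
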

\begin{proof}
Consider \cref{alg:merge}. 
At a leaf, the strict comparison selects the larger score, preferring $\funleft$ at equality. 
Otherwise, $\Children$ incorporates the parent suffixes into the child roots. 
If $\last(\funright_{\mathrm L})>\last(\funleft_{\mathrm L})$, the prefix property implies that $\funright$ exceeds $\funleft$ throughout the left half. 
Otherwise, $\funright$ does not exceed $\funleft$ anywhere in the right half. 
Thus one half is selected directly, and only the other requires recursion. 
Restriction preserves the prefix property, proving correctness by induction. 

Each call makes at most one recursive call and otherwise takes $O(1)$ time and creates $O(1)$ new function and witness nodes. 
The tree height gives the claimed bounds. 
\end{proof}

\begin{algorithm}[t]
\caption{
$\Merge(\funleft, \funright)$ for functions satisfying the prefix property. 
}
\label{alg:merge}
\begin{algorithmic}[1]
\setlength{\itemsep}{2pt}
\vspace{1mm}
\If{the interval is a singleton}
  \If{$\last(\funright)>\last(\funleft)$}
    \State \Return $\funright$
  \Else
    \State \Return $\funleft$
  \EndIf
\EndIf
\State $(\funleft_{\mathrm L},\funleft_{\mathrm R})\gets\Children(\funleft)$
\State $(\funright_{\mathrm L},\funright_{\mathrm R})\gets\Children(\funright)$
\If{$\last(\funright_{\mathrm L})>\last(\funleft_{\mathrm L})$}
  \State \Return $\JoinTrees(\funright_{\mathrm L},\Merge(\funleft_{\mathrm R},\funright_{\mathrm R}))$
\Else
  \State \Return $\JoinTrees(\Merge(\funleft_{\mathrm L},\funright_{\mathrm L}),\funleft_{\mathrm R})$
\EndIf
\end{algorithmic}
\end{algorithm}

Cached scores allow $\Merge$ to construct the pointwise maximum in a single descent. 
By \cref{lem:envelopes}, $\Merge(\envelope_{\lowset},\envelope_{\highset})$ is applicable whenever the endpoint sets are separated in this value order.

\subsection{Evaluation and reporting}

\begin{lemma}
\label{lem:witnesses}
A represented score at a given coordinate can be returned in $O(\log\inputsize)$ time, and its witness of length $\outputsize$ can be reported in $O(\outputsize+\log\inputsize)$ time.
Neither query changes the stored structure.
\end{lemma}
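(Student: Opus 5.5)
Evaluation is a single root-to-leaf descent of the tree. Starting at the root with the target coordinate $\leftbound$, we follow the child whose interval contains $\leftbound$ and record, level by level, the weights $\witnessweight(\suffix)$ of the visited nodes. By \eqref{eq:node-semantics}, applied inductively from the leaf upward, the represented score at $\leftbound$ equals the sum of these stored weights along the path, leaf included. We will therefore simply accumulate them. Each level costs $O(1)$ because $\witnessweight$ is stored in every handle. The height is at most $\lceil\log\inputsize\rceil$, so the total is $O(\log\inputsize)$. We will neither call $\Children$ nor allocate anything, so the stored structure is untouched.

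For reporting, we unfold the same semantics. Let the path be $v_0$ (root), $v_1,\dots,v_h$ (leaf), with suffix handles $\suffix_0,\dots,\suffix_h$. By induction on \eqref{eq:node-semantics}, the represented witness at $\leftbound$ is the concatenation $\suffix_h\,\suffix_{h-1}\cdots\suffix_0$, with deeper suffixes first. We collect these $h+1=O(\log\inputsize)$ handles in one descent and then expand them in order, from $\suffix_h$ up to $\suffix_0$, by an in-order traversal of each handle's $\Cat$-DAG. The traversal emits the leaves $\Leaf(\idx)$ of a handle from left to right. It uses an explicit stack, or recursion whose frames are charged to nodes, so it allocates nothing in the stored structure.

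The main obstacle is bounding the expansion cost by $O(\outputsize)$ rather than by the number of handle nodes reachable in some other sense. The key observation is that $\Cat$ creates a node only when both arguments are nonempty, so every $\Cat$ node has two nonempty children. A handle representing a sequence of length $m\geq1$ is therefore a full binary tree with $m$ leaves and $m-1$ internal nodes, and traversing it costs $O(m)$. This holds even though handles are shared as DAGs, because the traversal revisits a shared node once per occurrence, and each occurrence contributes distinct output positions. Empty handles $\emptyseq$ cost $O(1)$ each, and there are at most $O(\log\inputsize)$ of them along the path. Summing gives $O(\outputsize+\log\inputsize)$. The total length is $\outputsize$ because the concatenation of the path suffixes is exactly the represented witness. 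Its correctness as a feasible subsequence is deferred to \cref{sec:construction}, as the paper states, so here we only need to read out the represented sequence faithfully.
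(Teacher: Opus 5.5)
Your proposal is correct and matches the paper's proof essentially step for step. Both proofs make a single descent that sums the suffix weights, read the witness as $\suffix_{\pathheight}\suffix_{\pathheight-1}\cdots\suffix_0$, and bound the expansion by $O(\outputsize)$ regardless of sharing, using the fact that every $\Cat$ node has two nonempty children, so the unfolded forest has fewer than $2\outputsize$ node occurrences.
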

\begin{proof}
Descend to the coordinate's leaf, collecting the suffixes $\suffix_0,\dots,\suffix_{\pathheight}$ in root-to-leaf order.
By \eqref{eq:node-semantics}, the score is $\sum_{\jdx=0}^{\pathheight}\witnessweight(\suffix_{\jdx})$, which is computed in $O(\log\inputsize)$ time.
The witness is the concatenation $\suffix_{\pathheight}\suffix_{\pathheight-1}\cdots\suffix_0$. 
Discard empty handles and expand the remaining forest in this order, visiting the left child of each concatenation before the right child.
Every concatenation has two nonempty children.
The unfolded forest with $\outputsize$ leaf occurrences therefore has at most $2\outputsize$ node occurrences, regardless of sharing.
Expansion takes $O(\outputsize)$ time.
The traversal uses $O(\outputsize+\log\inputsize)$ workspace including the output and allocates no persistent nodes.
\end{proof}

\section{Preprocessing and queries}
\label{sec:construction}

We now prove \cref{thm:main}.
Maintain an ordinary mutable segment tree, denoted $\outertree$, on the value domain $[0,\inputsize)$.
Its entries are roots of the persistent trees from \cref{sec:persistence}, which are indexed by the left boundary. 
For an outer node $\treenode$ with value interval $\nodeinterval_{\treenode}$, maintain, immediately before processing position $\idx$,
\begin{equation}
\label{eq:invariant}
\noderoot_{\treenode}
=\envelope_{\{\jdx<\idx\, \mid\, \seqval_{\jdx}\in\nodeinterval_{\treenode}\}}.
\end{equation}
For positive scores, the stored witness is an optimum subsequence ending in this set. 
For zero scores, it is empty. 
Initially every entry points to the all-zero tree for $\envelope_{\emptyset}$.

The outer tree uses $\Merge$ as its combine operation.
For scores, this operation is pointwise maximum, with identity $\envelope_{\emptyset}$.
Its $O(\log\inputsize)$ implementation requires the ordered endpoint sets of \cref{lem:envelopes}.
Accordingly, $\outertree.\mathsf{prod}(0,\seqval_{\idx})$ folds the canonical intervals in increasing value order, placing the accumulated lower-value envelope first in each merge.
A point assignment $\outertree.\mathsf{set}(\seqval_{\idx},\dpopt_{\idx})$ recomputes ancestors by merging the left child's root with the right child's root.
Both operations use $O(\log\inputsize)$ merges, all on value-separated endpoint sets.
The operation $\outertree.\mathsf{prod}()$ returns the outer root's stored envelope.
The outer subdivision is the usual balanced recursive subdivision, so no padding is needed.

\begin{algorithm}[t]
\caption{Preprocessing the relabeled sequence.}
\label{alg:build}
\begin{algorithmic}[1]
\setlength{\itemsep}{2pt}
\vspace{1mm}
\State Build the full persistent tree for $\envelope_{\emptyset}$
\State Initialize all entries of $\outertree$ to $\envelope_{\emptyset}$
\State $\rootptr[0]\gets\envelope_{\emptyset}$
\For{$\idx=0,\dots,\inputsize-1$}
  \State $\predenv_{\idx}\gets\outertree.\mathsf{prod}(0,\seqval_{\idx})$
  \State $\dpopt_{\idx}\gets\PrefixAppend(\predenv_{\idx},\idx+1,\Leaf(\idx))$
  \State $\outertree.\mathsf{set}(\seqval_{\idx},\dpopt_{\idx})$
  \State $\rootptr[\idx+1]\gets\outertree.\mathsf{prod}()$
\EndFor
\end{algorithmic}
\end{algorithm}

\begin{lemma}
\label{lem:construction}
\Cref{alg:build} constructs $\rootptr[\rightbound]=\envelope_{[0,\rightbound)}$ for every $\rightbound\in[0,\inputsize]$, with optimum witnesses for positive values and empty witnesses for zero values.
Its time and total allocated space are $O(\inputsize\log^2\inputsize)$.
\end{lemma}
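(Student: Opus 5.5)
We argue by induction on $\idx$ that invariant \eqref{eq:invariant} holds for every outer node immediately before iteration $\idx$. The invariant is taken in its strengthened form: at every coordinate, the represented score equals the envelope value, and the represented witness is an optimum increasing subsequence in $[\leftbound,\idx)$ ending in the node's endpoint set when the score is positive, and empty otherwise. Initially every endpoint set is empty, and the all-zero tree with empty suffixes represents $\envelope_\emptyset$ with empty witnesses, so the base case holds.

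For the inductive step we follow \cref{alg:build} line by line. First, $\outertree.\mathsf{prod}(0,\seqval_\idx)$ folds the canonical nodes covering $[0,\seqval_\idx)$ in increasing value order. Their endpoint sets are disjoint, their union is $\{\jdx<\idx\mid\seqval_\jdx<\seqval_\idx\}$, and each accumulated set lies entirely below the next node's set in value. By \cref{lem:envelopes}, each merge meets the prefix hypothesis of \cref{lem:merge}, so the fold returns $\predenv_\idx$. The witness at each coordinate is one of the input witnesses that attains the maximum, so optimality is preserved; ties go to the left argument, which keeps zero scores paired with empty witnesses, since a zero score in $\funleft$ comes with an empty witness by induction.

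Next, \eqref{eq:prefix-dp} together with recurrence \eqref{eq:dp} shows that $\dpopt_\idx$ is represented correctly. For $\leftbound\le\idx$, the witness is the $\predenv_\idx$ witness followed by $\idx$. This sequence is feasible because that witness lies in $[\leftbound,\idx)$ with values below $\seqval_\idx$, and it is optimal by \eqref{eq:dp}. For $\leftbound>\idx$ the score is zero and the witness is empty. The point assignment then recomputes the ancestors of leaf $\seqval_\idx$ as $\Merge(\text{left},\text{right})$. The left child holds the lower-value set, so \cref{lem:envelopes} again licenses each merge, and afterwards the invariant holds for $\idx+1$. The outer root's set is $[0,\idx+1)$, so $\rootptr[\idx+1]=\envelope_{[0,\idx+1)}$, and $\rootptr[0]=\envelope_\emptyset$ covers $\rightbound=0$.

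The main subtlety is witness correctness through $\Merge$. \Cref{lem:merge} guarantees the score of the pointwise maximum, but we must check that the witness at each coordinate is exactly the witness of the selected argument. This follows from the semantics \eqref{eq:node-semantics}: $\Children$ pushes the parent suffix down by $\Append$, and $\JoinTrees$ uses an empty suffix, so every leaf's effective pair is unchanged by these operations. Persistence keeps earlier roots valid, since no node is ever modified.

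For the complexity bounds, relabeling takes $O(\inputsize\log\inputsize)$ time and the initial trees take $O(\inputsize)$. Each iteration performs $O(\log\inputsize)$ merges across $\mathsf{prod}$ and $\mathsf{set}$, plus one $\PrefixAppend$, and each of these costs $O(\log\inputsize)$ time and nodes. The final $\mathsf{prod}()$ costs $O(1)$. Summing over all $\inputsize$ iterations gives $O(\inputsize\log^2\inputsize)$ time and allocated space.
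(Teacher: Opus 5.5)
Your proposal is correct and follows the paper's proof essentially step for step. Both use the same induction on \eqref{eq:invariant} with its witness clause, ordered merges justified by \cref{lem:envelopes,lem:merge}, the prefix update realizing \eqref{eq:dp}, recomputation of the ancestors, and the same per-iteration cost count. You leave two points implicit that the paper states: the outer leaf for $\seqval_{\idx}$ previously held $\envelope_{\emptyset}$ because ranks are distinct, so overwriting it with $\dpopt_{\idx}=\envelope_{\{\idx\}}$ loses no endpoint, and outer nodes off the updated path keep their endpoint sets.
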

\begin{proof}
We prove \eqref{eq:invariant} and its witness assertion by induction on $\idx$.
They hold initially.
The canonical intervals partition $[0, \seqval_{\idx})$. 
Their ordered merges yield $\predenv_{\idx}$ by \cref{lem:envelopes,lem:merge}. 
For $\leftbound\leq\idx$, a predecessor witness of positive weight ends at an earlier position of smaller value, and appending $\idx$ gives a feasible subsequence of weight $\predenv_{\idx}(\leftbound)+\weight_{\idx}$.
A predecessor witness of weight zero is empty and gives the singleton $\idx$.
For $\leftbound>\idx$, the update leaves the zero value and empty witness unchanged.
Thus \eqref{eq:dp} proves that the new root represents $\dpopt_{\idx}$ with the required witnesses.

Since the ranks are distinct, the leaf for $\seqval_{\idx}$ was previously empty. 
At each updated ancestor, the child endpoint sets are separated by value. 
They partition the set in \eqref{eq:invariant} with $\idx$ replaced by $\idx + 1$. 
Their merge preserves the invariant and selects an optimum witness.
All other outer nodes are unchanged.
The outer root consequently represents $\envelope_{[0,\idx+1)}$.
Persistence preserves each saved root and its witnesses.

Each iteration uses $O(\log\inputsize)$ merges and one prefix append.
By \cref{lem:merge} and the prefix-update bound, it takes $O(\log^2\inputsize)$ time and allocates $O(\log^2\inputsize)$ function and witness nodes.
Relabeling takes $O(\inputsize\log\inputsize)$ time, and building the initial trees takes $O(\inputsize)$ time and space.
The input, mutable outer tree, and saved roots occupy $O(\inputsize)$ space.
Every allocated node has constant size, so even retaining all temporary nodes uses $O(\inputsize\log^2\inputsize)$ space.
The recursion stack uses $O(\log\inputsize)$ additional space. 
\end{proof}

\paragraph{Queries.}
For a nonempty interval $[\leftbound,\rightbound)$, evaluate $\rootptr[\rightbound]$ at $\leftbound$.
By \eqref{eq:query} and \cref{lem:construction}, its score is $\rangeopt(\leftbound,\rightbound)$, and its witness is an optimum increasing subsequence in the query interval.
\Cref{lem:witnesses} gives $O(\log \inputsize)$ value-query time and $O(\outputsize + \log\inputsize)$ reporting time. 
An empty interval returns zero and an empty list directly.
This proves \cref{thm:main}.

\section*{AI Assistance Disclosure}
ChatGPT (OpenAI) assisted with literature searches, algorithm design, proof development and checking, counterexample searches, implementation, and manuscript preparation. 
The author is responsible for the statements and proofs in this paper. 

\begingroup
\setlength{\emergencystretch}{3em}
\bibliographystyle{alphaurl}
\bibliography{references}
\endgroup
\end{document}